\newtheorem{thm}{Theorem}
\newtheorem{lem}{Lemma}
\newtheorem{cor}{Corollary}
\theoremstyle{definition}
\newtheorem{defn}{Definition}
\def\BibTeX{{\rm B\kern-.05em{\sc i\kern-.025em b}\kern-.08em
    T\kern-.1667em\lower.7ex\hbox{E}\kern-.125emX}}
\newcommand{\todo}[1]{{\color{red}#1}}
  \newenvironment{proofachievable}{%
   \proof}{\endproof}
\newenvironment{proofconverse}{%
  \proof}{\endproof}
\begin{document}

\title{Matching of Markov Databases Under Random Column Repetitions\\
\thanks{This work is supported by NYU WIRELESS Industrial Affiliates and National Science Foundation grant CCF-1815821.}}

\author{Serhat Bakirtas, Elza Erkip\\
 NYU Tandon School of Engineering\\
Emails: \{serhat.bakirtas, elza\}@nyu.edu }

\maketitle

\begin{abstract}
Matching entries of correlated shuffled databases have practical applications ranging from privacy to biology. In this paper, motivated by synchronization errors in the sampling of time-indexed databases, matching of random databases under random column repetitions and deletions is investigated. It is assumed that for each entry (row) in the database, the attributes (columns) are correlated, which is modeled as a Markov process. Column histograms are proposed as a permutation-invariant feature to detect the repetition pattern, whose asymptotic-uniqueness is proved using information-theoretic tools. Repetition detection is then followed by a typicality-based row matching scheme. Considering this overall scheme, sufficient conditions for successful matching of databases in terms of the database growth rate are derived. A modified version of Fano's inequality leads to a tight necessary condition for successful matching, establishing the matching capacity under column repetitions. This capacity is equal to the erasure bound, which assumes the repetition locations are known a-priori. Overall, our results provide insights on privacy-preserving publication of anonymized time-indexed data.

\end{abstract}

\section{Introduction}
\label{sec:introduction}
Recently, with the proliferation of smart devices and the emergence of big data applications, there has been a growing concern over potential privacy leakage from \emph{anonymized} data, approached from legal~\cite{ohm2009broken} and corporate~\cite{bigdata} points of view. These concerns are also articulated in the respective literatures through successful practical de-anonymization attacks on real data~\cite{naini2015you,datta2012provable,narayanan2008robust,sweeney1997weaving,takbiri2018matching,wondracek2010practical,su2017anonymizing,shusterman2019robust,gulmezoglu2017perfweb,bilge2009all,srivatsa2012deanonymizing,cheng2010you,kinsella2011m,kim2016inferring}.

In the light of the above practical attacks, several groups initiated rigorous analyses of the graph matching problem~\cite{erdos1960evolution,babai1980random,janson2011random,czajka2008improved,yartseva2013performance,pedarsani2013bayesian,fiori2013robust,lyzinski2014seeded,onaran2016optimal,cullina2016improved}. Correlated graph matching has applications beyond privacy, such as image processing~\cite{sanfeliu2002graph} and DNA sequencing, which is shown to be equivalent to matching bipartite graphs~\cite{blazewicz2002dna}. Matching of correlated databases, also equivalent to bipartite graph matching, have been investigated from information-theoretic~\cite{shirani8849392,cullina,dai2019database,bakirtas2021database,bakirtas2022seeded} and statistical~\cite{kunisky2022strong} perspectives. In \cite{cullina}, Cullina \emph{et al.} introduced \textit{cycle mutual information} as a correlation metric and derived sufficient conditions for successful matching and a converse result using perfect recovery as error criterion. In \cite{shirani8849392}, Shirani \emph{et al.} considered a pair of databases of the same size, and drawing an analogy between channel decoding and database matching, derived necessary and sufficient conditions on the database growth rate for successful database matching. In~\cite{dai2019database}, Dai \emph{et al.} considered the matching of a pair of databases with jointly Gaussian features with perfect recovery constraint. Similarly, in~\cite{kunisky2022strong}, Kunisky and Niles-Weed considered the same problem from the statistical perspective in different regimes of database size and under several recovery criteria.
\begin{figure}[t]
\centerline{\includegraphics[width=0.5\textwidth,trim={0cm 12cm 2cm 0},clip]{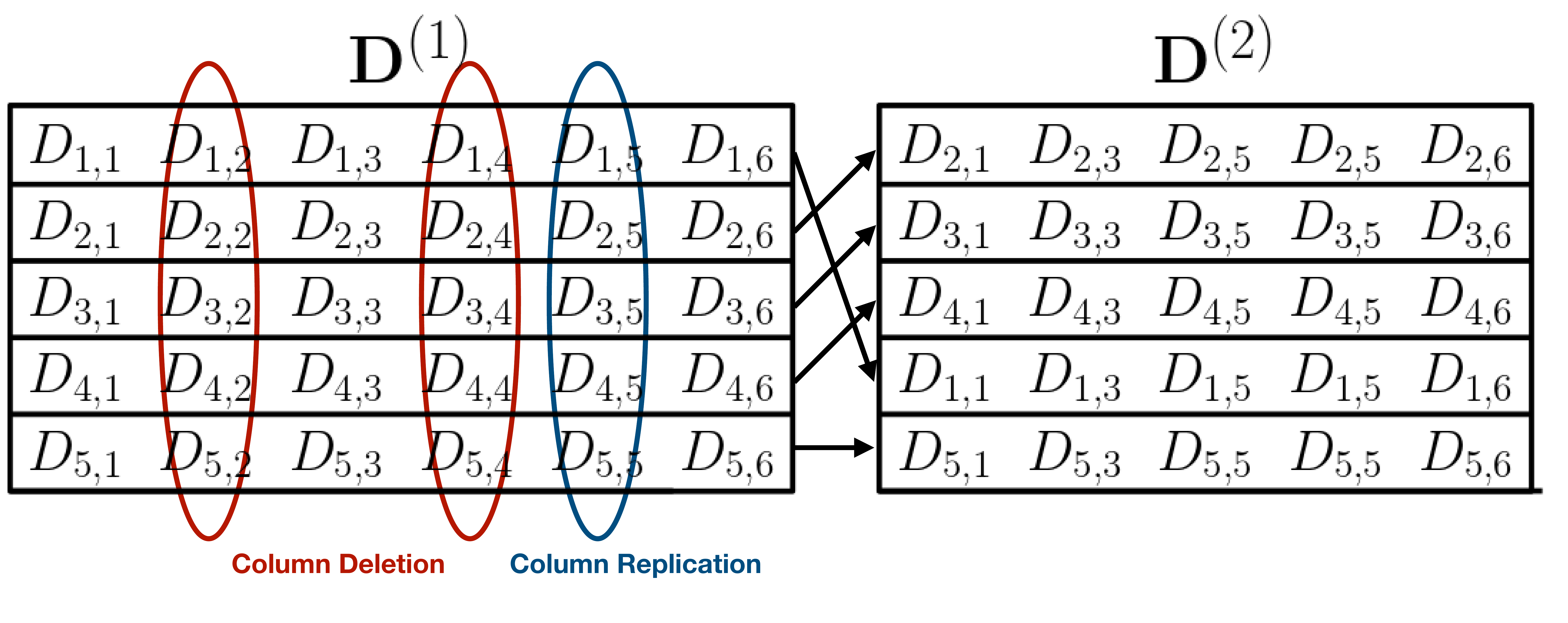}}
\caption{An illustrative example of database matching under column repetitions. The columns circled in red are deleted whereas the column circled in blue is repeated twice, \emph{i.e.,} replicated. Our goal is to estimate the row permutation $\boldsymbol{\Theta}$, which in this example given as; $\boldsymbol{\Theta}(1)=4$, $\boldsymbol{\Theta}(2)=1$, $\boldsymbol{\Theta}(3)=2$,
$\boldsymbol{\Theta}(4)=3$ and $\boldsymbol{\Theta}(5)=5$, by matching the rows of $\mathbf{D}^{(1)}$ and $\mathbf{D}^{(2)}$. Here the $i$\textsuperscript{th} row of $\mathbf{D}^{(1)}$ corresponds to the $\Theta(i)$\textsuperscript{th} row of $\mathbf{D}^{(2)}$.}
\label{fig:intro}
\end{figure}

Motivated by synchronization errors in sampling of time-series datasets, in our prior work we considered database matching under \emph{random column deletions}~\cite{bakirtas2021database}. Assuming an \emph{i.i.d.} underlying distribution for database attributes (columns) and the same successful matching criterion as~\cite{shirani8849392},  we derived an achievable database growth rate assuming a probabilistic side information on the deletion locations. We also proposed an algorithm to extract the side information on the deletion locations from a batch of already-matched rows, called \emph{seeds.}

In this paper, we generalize~\cite{bakirtas2021database}, by assuming a more general model for synchronization errors, namely \emph{column repetitions} where in addition to some columns being deleted as in~\cite{bakirtas2021database}, some columns may be sampled several times consecutively, \emph{i.e.,} \emph{replicated}, as illustrated in Figure~\ref{fig:intro}. Furthermore, in order to account for the potential correlation among the attributes (columns), we model the rows using a Markov process contrary. Under this generalized model, we derive an improved achievable database growth rate. We propose a novel \emph{histogram-based} \emph{repetition detection} algorithm, where we compare the column histograms in order to infer the column repetition pattern with high probability, followed by a typicality-based matching scheme to match the rows of the correlated databases. We also derive the necessary conditions for successful matching. We show that the necessary and sufficient conditions are tight up to equality, and equal to the erasure bound,  which is obtained when there are no replications and the deletion locations are perfectly known. Thus, we completely characterize the \emph{capacity} of the matching of column repeated databases.
 
The organization of this paper is as follows: Section \ref{sec:problemformulation} introduces the problem formulation. In Section \ref{sec:achievability}, our main result on matching capacity and the proof of the achievability are presented. In Section \ref{sec:converse}, the converse is proved. Finally, in Section \ref{sec:conclusion} the results and ongoing work are discussed.

\noindent{\em Notation:} We denote the set of integers $\{1,...,n\}$ as $[n]$, and matrices with uppercase bold letters. For a matrix $\mathbf{D}$, $D_{i,j}$ denotes the $(i,j)$\textsuperscript{th} entry. Furthermore, by $A^n$, we denote a row vector consisting of scalars $A_1,\dots,A_n$ and the indicator of event $\epsilon$ by $\mathbbm{1}_\epsilon$. The logarithms, unless stated explicitly, are in base $2$.

\section{Problem Formulation}
\label{sec:problemformulation}

We use the following definitions, some of which are similar to~\cite{shirani8849392,bakirtas2021database,bakirtas2022seeded} to formalize our problem.

\begin{defn}{\textbf{(Unlabeled Markov Database)}}\label{defn:markovdb}
An ${(m_n,n,\mathbf{P})}$ \emph{unlabeled Markov database} is a randomly generated ${m_n\times n}$ matrix ${\mathbf{D}=\{D_{i,j}\in\mathfrak{X}:i\in[m_n],j\in[n]\}}$ whose rows are \emph{i.i.d.} and follow a first-order stationary Markov process defined over the alphabet ${\mathfrak{X}=\{1,\dots,|\mathfrak{X}|\}}$ with probability transition matrix $\mathbf{P}$ such that
\begin{align}
    \mathbf{P} &= \gamma \mathbf{I} + (1-\gamma) \mathbf{U}\label{eq:markovtransitionmatrix}\\
    U_{i,j} &= u_j>0, \: \forall (i,j)\in \mathfrak{X}^2
\end{align}
and 
\begin{gather}
        \sum\limits_{j\in\mathfrak{X}} u_j =1\\
    \gamma \in \Big(-\min\limits_{j\in\mathfrak{X}}\frac{u_j}{1-u_j},1\Big)
\end{gather}
where $\mathbf{I}$ is the identity matrix. It is assumed that ${D_{i,1}\overset{\text{i.i.d.}}{\sim}\pi=[u_1,\dots,u_{|\mathfrak{X}|}]}$, $i=1,\dots,m_n$,  where $\pi$ is the stationary distribution associated with $\mathbf{P}$.
\end{defn}

\begin{defn}{\textbf{(Column Repetition Pattern)}}
The \emph{column repetition pattern} $S^n$ is a random vector consisting of \emph{i.i.d.} elements $S_j$, $j\in [n]$, drawn from a discrete probability distribution $p_S$ with a finite discrete support ${\{0,\dots,s_{\max}\}}$. The parameter ${\delta\triangleq p_S(0)}$ is called the \emph{deletion probability}.
\end{defn}

\begin{defn}{\textbf{(Labeled Repeated Database)}}\label{defn:labeleddb}
Let $\mathbf{D}^{(1)}$ be an ${(m_n,n,P)}$ unlabeled Markov database, $S^n$ be the column repetition pattern, and $\boldsymbol{\Theta}_n$ be a uniform permutation of $[m_n]$ with $\mathbf{D}^{(1)}$, $S^n$ and $\boldsymbol{\Theta}_n$ independently chosen. Given $\mathbf{D}^{(1)}$ and $S^n$, the pair ${(\mathbf{D}^{(2)},\boldsymbol{\Theta}_n)}$ is called the \emph{labeled repeated database} if the $(i,j)$\textsuperscript{th} element $D_{i,j}^{(1)}$ of $\mathbf{D}^{(1)}$ and its counterpart $D_{i,j}^{(2)}$ in $\mathbf{D}^{(2)}$ have the following relation:
\begin{align}
D^{(2)}_{i,j}&=
    \begin{cases}
      E , &  \text{if } S_{j}=0\\
      D^{(1)}_{\boldsymbol{\Theta}_n^{-1}(i),j}\otimes {1}^{S_{j}} & \text{if } S_{i}\ge 1
    \end{cases} 
\end{align}
where ${1}^{S_{j}}$ and $\otimes$ denote the all-ones row vector of length $S_j$ and the Kronecker product, respectively. Furthermore ${D^{(2)}_{i,j}=E}$ corresponds to the empty string and ${D^{(2)}_{i,j}=D^{(1)}_{\boldsymbol{\Theta}_n^{-1}(i),j}\otimes 1^{S_{j}}}$ corresponds to the $j$\textsuperscript{th} column of $\mathbf{D}^{(2)}$ being an $m_n\times S_j$ matrix consisting of $S_j$ copies of the $j$\textsuperscript{th} column of $\mathbf{D}^{(1)}$, concatenated together after shuffling with $\boldsymbol{\Theta}_n$. $\boldsymbol{\Theta}_n$ and $\mathbf{D}^{(2)}$ are called the \emph{labeling function} and \emph{correlated column repeated database}, respectively. The respective rows $D^{(1)}_{i_1}$ and $D^{(2)}_{i_2}$ of $\mathbf{D}^{(1)}$ and $\mathbf{D}^{(2)}$ are said to be \emph{matching rows}, if ${\boldsymbol{\Theta}_n(i_1)=i_2}$.

If ${S_{j}=0}$, the $j$\textsuperscript{th} column of $\mathbf{D}^{(1)}$ is said to be \emph{deleted} and if ${S_{j}>1}$, $j$\textsuperscript{th} column of $\mathbf{D}^{(1)}$ is said to be \emph{replicated}.
\end{defn}

In our model, the correlated column repeated database $\mathbf{D}^{(2)}$ is obtained by permuting the rows of the unlabeled Markov database $\mathbf{D}^{(1)}$ with the uniform permutation $\boldsymbol{\Theta}_n$ followed by column repetition based on the repetition pattern $S^n$. We further assume that there is no noise on the retained entries, as is often done in the repeat channel literature~\cite{cheraghchi2020overview}. 

\begin{defn}{\textbf{(Successful Matching Scheme)}}
A \emph{matching scheme} is a sequence of mappings $\varphi_n: (\mathbf{D}^{(1)},\mathbf{D}^{(2)})\to \hat{\boldsymbol{\Theta}}_n $ where $\mathbf{D}^{(1)}$ is the unlabeled database, $\mathbf{D}^{(2)}$ is the correlated column repeated database and $\hat{\boldsymbol{\Theta}}_n$ is the estimate of the true labeling function $\boldsymbol{\Theta}_n$. The scheme $\varphi_n$ is said to be \emph{successful} if 
\begin{align}
    \lim\limits_{n\to\infty}\Pr\left(\boldsymbol{\Theta}_n(J)\neq\hat{\boldsymbol{\Theta}}_n(J)\right)&\to 0
\end{align}
where $J\sim\text{Unif}([m_n])$. Here the event $\boldsymbol{\Theta}_n(J)\neq\hat{\boldsymbol{\Theta}}_n(J)$ is called the \emph{matching error}.
\end{defn}

Similar to \cite{shirani8849392,bakirtas2021database,bakirtas2022seeded}, our performance metric is the probability of mismatch of a uniformly chosen row. This formulation allows us to derive results for a wide set of database distributions. Note that this performance metric is different than those of~\cite{cullina,dai2019database}, where the probability of the perfect recovery of the complete labeling function $\boldsymbol{\Theta}_n$ was considered.

The relationship between the row size $m_n$ and the column size $n$ of the unlabeled database affects the probability of matching error in the following fashion: For a given $n$, as $m_n$ increases, so does the probability of matching error due to the increased number of candidate rows. As stated in~\cite[Theorem 1.2]{kunisky2022strong}, for the setting in our paper, the regime of interest is $m_n$ growing exponentially in $n$.
\begin{defn}\label{defn:dbgrowthrate}{\textbf{(Database Growth Rate)}}
The \emph{database growth rate} $R$ of an unlabeled Markov database with $m_n$ rows and $n$ columns is defined as 
\begin{align}
    R&=\lim\limits_{n\to\infty} \frac{1}{n}\log m_n
\end{align}
\end{defn}

\begin{defn}{\textbf{(Achievable Database Growth Rate)}}\label{defn:achievable}
Consider a sequence of ${(m_n,n,P)}$ unlabeled Markov databases, a repetition probability distribution $p_S$ and the resulting labeled repeated databases. A database growth rate $R$ is said to be \emph{achievable} if there exists a successful matching scheme when the unlabeled database has growth rate $R$.
\end{defn}
\begin{defn}{\textbf{(Matching Capacity)}}\label{defn:matchingcapacity}
The \emph{matching capacity} $C$ is the supremum of the set of all achievable rates corresponding to a probability transition matrix $\mathbf{P}$ and a repetition probability distribution $p_S$.
\end{defn}

Our goal in this paper is to characterize the matching capacity of the database matching problem under the aforementioned Markovian row process and random column repetition models.
\section{Matching Capacity and Achievability}
\label{sec:achievability}
Theorem~\ref{thm:mainresult} below presents our main result on the matching capacity. We prove the achievability part of Theorem~\ref{thm:mainresult} in this section and the converse in Section~\ref{sec:converse}.

\begin{thm}{\textbf{(Matching Capacity Under Column Repetitions)}}\label{thm:mainresult}
Consider a probability transition matrix $\mathbf{P}$ and a repetition probability distribution $p_S$. Then, the matching capacity is
\begin{align}
    C &= (1-\delta)^2 \sum\limits_{r=0}^\infty \delta^r H(X_0|X_{-r-1})\label{eq:mainresult}
\end{align}
where $\delta\triangleq p_S(0)$ is the deletion probability and $H(X_0|X_{-r-1})$ is the entropy rate associated with the probability transition matrix
\begin{align}
    \mathbf{P}^{r+1}&=\gamma^{r+1} \mathbf{I}+(1- \gamma^{r+1}) \mathbf{U}/ \label{eq:Ppower}
\end{align}
The capacity can further be simplified as
\begin{align}
    C &= \frac{(1-\delta)(1-\gamma)}{(1-\gamma\delta)} [H(\pi)+\sum\limits_{i\in\mathfrak{X}} u_i^2\log u_i] \notag\\
    &\hspace{1em}- (1-\delta)^2 \sum\limits_{r=0}^\infty \delta^r \sum\limits_{i\in \mathfrak{X}} u_i (\gamma^{r+1}+(1-\gamma^{r+1})u_i)\notag\\
    &\hspace{10em}\log (\gamma^{r+1}+(1-\gamma^{r+1})u_i)\label{eq:thm1eval}
\end{align}
where $H(\pi)$ denotes the entropy of the stationary distribution $\pi$.
\end{thm}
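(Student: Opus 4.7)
The plan is to establish achievability and converse separately, with both sides converging on the \emph{erasure-channel} rate obtained by revealing the repetition pattern $S^n$ to the decoder. For achievability, I would split the matching scheme into a \emph{repetition-detection} step followed by a \emph{typicality-based row-matching} step. Because the rows of $\mathbf{D}^{(1)}$ are i.i.d., every column of $\mathbf{D}^{(1)}$ is an i.i.d.\ stationary sample from $\pi$, and the column histogram is invariant under the row permutation $\boldsymbol{\Theta}_n$, so two consecutive copies of a replicated column in $\mathbf{D}^{(2)}$ produce \emph{identical} histograms, whereas two independently drawn columns produce colliding empirical distributions with vanishing probability. Taking $m_n = 2^{nR}$ and union-bounding over the $O(n)$ adjacent column pairs yields an estimate $\hat{S}^n = S^n$ with probability tending to $1$; this asymptotic uniqueness of histograms is the main technical obstacle, and I would make it rigorous via a Chernoff/method-of-types argument on the collision probability of independent multinomial empirical distributions.

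Once $S^n$ is effectively known, for each row of $\mathbf{D}^{(1)}$ I would declare a match with the unique row of $\mathbf{D}^{(2)}$ whose retained entries form a jointly typical sequence with the row's retained entries under the Markov law $\mathbf{P}$ (equivalently, retain one representative per replicated column block and compare). Because the repetitions are noiseless, the true match passes the test with probability $\to 1$ by the Markov AEP, while an incorrect row passes with probability at most $2^{-n H_{\text{eff}} (1-o(1))}$, where $H_{\text{eff}}$ is the expected per-column entropy of the retained sub-process of the Markov row given $S^n$. Conditioning on the gap to the previous retained column, which equals $r+1$ with probability $\delta^r(1-\delta)$ conditioned on the current column being retained, and using that the entropy of $X_0$ given $X_{-r-1}$ is the entropy rate of $\mathbf{P}^{r+1}$, I obtain
\begin{align*}
H_{\text{eff}} = (1-\delta)\sum_{r\geq 0}\delta^r(1-\delta)\,H(X_0|X_{-r-1}) = C.
\end{align*}
A union bound over the $m_n - 1$ incorrect candidate rows then shows that the matching error vanishes for every $R < C$. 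The closed form \eqref{eq:thm1eval} comes from inserting $\mathbf{P}^{r+1}=\gamma^{r+1}\mathbf{I}+(1-\gamma^{r+1})\mathbf{U}$ into $H(X_0|X_{-r-1})$ and simplifying.

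For the converse, the natural route is a \emph{genie argument} combined with a modified Fano inequality. I would grant the decoder $S^n$ as side information, which can only enlarge the achievable-rate set, so it suffices to upper bound the capacity of the genie-aided problem. In that setting the second database is the output of a per-column erasure channel (with erasure probability $\delta$) applied to a Markov source, and Fano applied to the per-row matching error of a uniformly chosen index $J$ upper bounds $\log m_n$ by the single-row mutual information between a row of $\mathbf{D}^{(1)}$ and its retained counterpart in $\mathbf{D}^{(2)}$, whose normalized limit is exactly $C$. The delicate point is that Fano is usually phrased in terms of recovering the \emph{whole} permutation, whereas the success criterion here is per-row; the ``modified'' Fano inequality alluded to in the introduction must extract the correct per-row information, which can be done by conditioning on the restriction of $\boldsymbol{\Theta}_n$ to the remaining rows and exploiting the uniform choice of $J$. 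Combining the two bounds characterizes $C$ exactly.
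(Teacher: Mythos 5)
Your overall architecture --- histogram-based detection of the repetition pattern followed by joint-typicality row matching for achievability, and a genie plus modified-Fano argument for the converse --- is exactly the paper's. Your conditional-entropy derivation of the rate $(1-\delta)^2\sum_{r\ge 0}\delta^r H(X_0|X_{-r-1})$ recovers the erasure-channel mutual information rate with Markov inputs that the paper simply cites from the literature, and your converse sketch matches the paper's: it bounds $H(\boldsymbol{\Theta}_n\mid\mathbf{D}^{(1)},\mathbf{D}^{(2)})\le 1+P_e\log(m_n!)$, single-letterizes to $m_n I(X^n;\bar{Y}^n,S^n)=m_n I(X^n;\bar{Y}^n)$ using the independence of $S^n$ and the i.i.d.\ rows, and normalizes by $m_n n$.

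The genuine gap is in the repetition-detection step. First, union-bounding over the $O(n)$ \emph{adjacent} column pairs of $\mathbf{D}^{(2)}$ only certifies which retained columns are replicas of one another; it does not tell you which columns of $\mathbf{D}^{(1)}$ were deleted. To localize deletions you must match each column histogram of $\mathbf{D}^{(2)}$ back to a column of $\mathbf{D}^{(1)}$, and for that assignment to be unambiguous you need all $\binom{n}{2}$ pairs of histograms of $\mathbf{D}^{(1)}$ to be distinct --- this is the paper's Lemma~1, the union bound is over $O(n^2)$ pairs, the required collision probability is $o(n^{-2})$, and the resulting sufficient condition is $m_n=\omega(n^4)$ (automatically satisfied at exponential $m_n$). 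Second, your premise that distinct columns are ``independently drawn'' is false under the Markov row model: for $\gamma\neq 0$ the histograms $\tilde{H}^{(1)}_i$ and $\tilde{H}^{(1)}_j$ are \emph{correlated} binomial counts, and bounding their collision probability uniformly over the lag $|i-j|$ is precisely the technical content of the paper's appendix (it conditions on $\tilde{H}^{(1)}_i=r$, writes $\tilde{H}^{(1)}_j$ as a sum of two independent binomials governed by $\tilde{\mathbf{P}}^{|i-j|}$, and controls the sum via Stirling and Pinsker). A Chernoff/method-of-types argument for independent multinomial empirical distributions does not cover this case, so that step of your plan would fail as stated; the fix is exactly the paper's collapsed-histogram lemma.
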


Observe that the RHS of~\eqref{eq:mainresult} is the mutual information rate for an erasure channel with erasure probability $\delta$ with first-order Markov $(\mathbf{P})$ inputs, as given in~\cite{li2014input}. Thus, Theorem~\ref{thm:mainresult} states that we can achieve the erasure bound which assumes a-priori knowledge of the column repetition pattern. This is unlike channel synchronization problems, where the erasure bound is a loose upper bound on the capacity. As we see below, the identicality of the repetition pattern across rows allows us to detect repetitions using a collapsed histogram based detection applied to columns. Finally, the matching capacity depends on the repetition distribution only through the deletion probability $\delta=p_S(0)$, rendering the replicated columns of the database irrelevant, as discussed in Section~\ref{sec:converse}. 

Note that the special case where $\gamma=0$ results in an \emph{i.i.d.} database distribution. Thus we have the following corollary:
\begin{cor}{(\textbf{i.i.d. Database Columns})}\label{cor:iid}
When the database entries $D_{i,j}$ are drawn \emph{i.i.d.} from $\mathfrak{X}$ according to $p_X$, the matching capacity becomes
\begin{align}
    C = (1-\delta) H(X)
\end{align}
where $\delta\triangleq p_S(0)$ is the deletion probability.
\end{cor}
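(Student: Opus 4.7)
The plan is to derive Corollary~\ref{cor:iid} as a direct specialization of Theorem~\ref{thm:mainresult} with $\gamma = 0$, so the work reduces to matching the model hypotheses and simplifying the capacity formula.

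First I would confirm that $\gamma = 0$ corresponds to the i.i.d.\ column model in the corollary. Setting $\gamma = 0$ in~\eqref{eq:markovtransitionmatrix} yields $\mathbf{P} = \mathbf{U}$, whose rows are identical and equal to $\pi = [u_1, \dots, u_{|\mathfrak{X}|}]$. Hence under $\mathbf{P}$ the next symbol is independent of the current one and distributed according to $\pi$. Combined with the initial draw $D_{i,1} \sim \pi$ from Definition~\ref{defn:markovdb}, this makes every entry $D_{i,j}$ i.i.d.\ with distribution $p_X = \pi$, exactly as in the hypothesis of Corollary~\ref{cor:iid}. One should also note that $\gamma = 0$ lies in the admissible range $\gamma \in \bigl(-\min_j u_j/(1-u_j),\, 1\bigr)$ from Definition~\ref{defn:markovdb}, so Theorem~\ref{thm:mainresult} does apply.

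Next I would simplify the entropy terms appearing in~\eqref{eq:mainresult}. Substituting $\gamma = 0$ into~\eqref{eq:Ppower} gives $\mathbf{P}^{r+1} = \mathbf{U}$ for every $r \ge 0$ (equivalently, $\mathbf{U}^{k} = \mathbf{U}$ since $\mathbf{U}$ is a rank-one stochastic matrix). Consequently $X_0$ is independent of $X_{-r-1}$ under $\mathbf{P}^{r+1}$, which yields
\begin{align}
H(X_0 \mid X_{-r-1}) = H(\pi) = H(X)
\end{align}
for all $r \ge 0$, where the last equality uses the identification $p_X = \pi$ established above.

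Finally, I would plug this constant value into~\eqref{eq:mainresult} and collapse the geometric series $\sum_{r=0}^{\infty} \delta^r = 1/(1-\delta)$ to obtain
\begin{align}
C = (1-\delta)^2 \sum_{r=0}^{\infty} \delta^r H(X) = (1-\delta)\, H(X),
\end{align}
which is the claimed expression. Because the entire argument is a pure specialization of Theorem~\ref{thm:mainresult}, there is no genuine obstacle here; the only nontrivial step is recognizing that $\gamma = 0$ forces $\mathbf{P}^{r+1}$ to stay equal to $\mathbf{U}$ for every $r$, which trivializes the per-term entropy and allows the infinite series to be summed in closed form.
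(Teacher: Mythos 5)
Your proposal is correct and follows exactly the route the paper intends: the remark preceding Corollary~\ref{cor:iid} identifies the i.i.d.\ case with $\gamma=0$, under which $\mathbf{P}^{r+1}=\mathbf{U}$ makes every term $H(X_0\mid X_{-r-1})$ equal to $H(X)$ and the geometric series in~\eqref{eq:mainresult} collapses to $(1-\delta)H(X)$. The paper leaves these computations implicit, so your write-up simply fills in the same specialization in full detail.
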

Note that Corollary~\ref{cor:iid} improves the achievability result of~\cite{bakirtas2021database}. Therefore, in addition to generalizing~\cite{bakirtas2021database} to Markov databases and column repetitions, this paper also improves the achievability result of~\cite{bakirtas2021database}.

To prove the achievability of the matching capacity in Theorem~\ref{thm:mainresult}, we consider the following two-phase matching scheme: Given $\mathbf{D}^{(1)}$ and $\mathbf{D}^{(2)}$, the unlabeled and the correlated column repeated databases, we first infer the underlying repetition pattern $S^n$ using the \emph{collapsed histogram based detection} algorithm on the column histograms of $\mathbf{D}^{(1)}$ and $\mathbf{D}^{(2)}$. Then, we use a joint typicality based sequence matching scheme to match the rows of $\mathbf{D}^{(1)}$ and $\mathbf{D}^{(2)}$.

The collapsed histogram based repetition detection algorithm works as follows: First, for tractability, we \say{collapse} the Markov chain into a binary-valued one. We pick a symbol $x$ from the alphabet $\mathfrak{X}$, WLOG $x=1$, and define the \emph{collapsed} databases $\tilde{\mathbf{D}}^{(1)}$ and $\tilde{\mathbf{D}}^{(2)}$ as follows:
\begin{align}
    \tilde{\mathbf{D}}^{(r)}_{i,j} &= \begin{cases}
    1 & \text{if } {\mathbf{D}}^{(r)}_{i,j} = 1\\
    2 & \text{if } {\mathbf{D}}^{(r)}_{i,j} \neq 1
    \end{cases}, \: \forall (i,j),\: r=1,2
\end{align}
From~\cite[Theorem 3]{burke1958markovian} and \eqref{eq:markovtransitionmatrix}, the rows of the collapsed database $\tilde{\mathbf{D}}^{(1)}$ become \emph{i.i.d.} first-order stationary binary Markov chains, with the following probability transition matrix and stationary distribution:
\begin{align}
    \tilde{\mathbf{P}}&=\begin{bmatrix}\gamma+(1-\gamma) u_1 & (1-\gamma)(1-u_1)\\
    (1-\gamma)u_1 & 1-(1-\gamma)u_1
    \end{bmatrix}\\
    \tilde{\pi}&=\begin{bmatrix}
    u_1 & 1-u_1
    \end{bmatrix}
\end{align}
Note that after collapsing the Markov chain, the histogram of the $j$\textsuperscript{th} column of $\tilde{\mathbf{D}}^{(1)}$ can be represented by the scalar $\tilde{H}_j^{(1)}$ which denotes the number of occurrences of state 2 in the $j$\textsuperscript{th} column of $\tilde{\mathbf{D}}^{(1)}$, $j\in[n]$. More formally, we have
\begin{align}
    \tilde{H}^{(1)}_j &\triangleq \sum\limits_{i=1}^{m_n} \mathbbm{1}_{\left[\tilde{D}^{(1)}_{i,j}= 2 \right]},\forall j\in [n]\label{eq:histogramdefn}
\end{align}

Our histogram-based detection algorithm exploits two facts: First, the histogram (equivalently the type) of each column of $\tilde{\mathbf{D}}^{(1)}$ and $\tilde{\mathbf{D}}^{(2)}$ is invariant to row permutations. Second, as we prove in Lemma~\ref{lem:histogram}, the histogram of each column is asymptotically-unique due to the row size $m_n$ being exponential in the column size $n$. Finally, since there is no noise on the retained entries of $\mathbf{D}^{(2)}$, we can match the column histograms, present in both $\tilde{\mathbf{D}}^{(1)}$ and $\tilde{\mathbf{D}}^{(2)}$ and detect the deleted columns, in an error-free fashion.

The following lemma provides conditions for the asymptotic-uniqueness of column histograms ${\tilde{H}_j^{(1)}}$, ${j\in[n]}$.

\begin{lem}{\textbf{(Asymptotic Uniqueness of the Column Histograms)}}\label{lem:histogram}
Let $\tilde{H}^{(1)}_j$ denote the histogram of the $j$\textsuperscript{th} column of $\tilde{\mathbf{D}}^{(1)}$, as in~\eqref{eq:histogramdefn}.
Then,
\begin{align}
    \Pr\left(\exists i,j\in [n],\: i\neq j,\tilde{H}^{(1)}_i=\tilde{H}^{(1)}_j\right)\to 0 \text{ as }n\to \infty
\end{align}
if $m_n=\omega(n^4)$.
\end{lem}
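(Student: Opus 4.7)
The plan is a union-bound argument over all $\binom{n}{2}$ pairs of distinct columns, combined with an anti-concentration bound showing that the histograms of any two fixed columns coincide with probability only $O(1/\sqrt{m_n})$. By the union bound,
$$\Pr\bigl(\exists i\neq j\in[n]:\tilde{H}_i^{(1)}=\tilde{H}_j^{(1)}\bigr)\;\le\;\binom{n}{2}\max_{i\neq j}\Pr\bigl(\tilde{H}_i^{(1)}=\tilde{H}_j^{(1)}\bigr),$$
so it suffices to bound the per-pair collision probability by $o(n^{-2})$.

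For a fixed pair $i\neq j$, I would exploit the fact that the rows of $\tilde{\mathbf{D}}^{(1)}$ are i.i.d.\ Markov sample paths and write
$$\tilde{H}_i^{(1)}-\tilde{H}_j^{(1)}\;=\;\sum_{r=1}^{m_n}Z_r,\qquad Z_r\triangleq\mathbbm{1}_{\bigl[\tilde{D}_{r,i}^{(1)}=2\bigr]}-\mathbbm{1}_{\bigl[\tilde{D}_{r,j}^{(1)}=2\bigr]},$$
so that the $Z_r$ are i.i.d.\ integer random variables supported on $\{-1,0,1\}$. Stationarity of the chain gives $\Pr(\tilde{D}_{r,i}^{(1)}=2)=\Pr(\tilde{D}_{r,j}^{(1)}=2)=1-u_1$, hence $E[Z_r]=0$. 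The positivity of every $u_k$ together with $\gamma<1$ makes the collapsed transition matrix $\tilde{\mathbf{P}}$ strictly positive, so both $\Pr(Z_r=1)$ and $\Pr(Z_r=-1)$ are strictly positive. Consequently $Z_r$ has positive variance $\sigma^2$ and its support generates all of $\mathbb{Z}$, so the local central limit theorem for sums of zero-mean, bounded, span-$1$ lattice variables yields
$$\Pr\Bigl(\sum_{r=1}^{m_n}Z_r=0\Bigr)=O\bigl(1/\sqrt{m_n}\bigr),$$
with an implied constant depending only on $\gamma$ and $u_1,\dots,u_{|\mathfrak{X}|}$. Combining this with the union bound produces $\Pr\bigl(\exists i\neq j:\tilde{H}_i^{(1)}=\tilde{H}_j^{(1)}\bigr)=O(n^2/\sqrt{m_n})$, which vanishes precisely when $m_n=\omega(n^4)$.

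The main technical point is the anti-concentration step in the second paragraph, but it reduces to two easily verifiable properties of $Z_r$ (non-degeneracy and span $1$), both immediate from the irreducibility of the collapsed chain. If one wishes to avoid invoking the full local CLT machinery, the same $O(1/\sqrt{m_n})$ estimate can be obtained by a Stirling computation of the maximum point mass of a sum of bounded i.i.d.\ integer variables, or by Esseen's classical concentration inequality; either route delivers the required $n^2/\sqrt{m_n}\to 0$ decay under the hypothesis $m_n=\omega(n^4)$.
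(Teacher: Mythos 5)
Your proof is correct, and it takes a genuinely different route from the paper's. The paper keeps the two histograms as correlated binomials: it conditions on $\tilde{H}^{(1)}_1=r$, writes the conditional law of $\tilde{H}^{(1)}_{s+1}$ as a sum $A+B$ of two independent binomials governed by $\mathbf{Q}=\tilde{\mathbf{P}}^s$, and then grinds out the $O(m_n^{-1/2+o(1)})$ collision bound by Stirling's approximation and a method-of-types split (large-divergence terms decay exponentially; the $O(m_n\epsilon_n)$ small-divergence terms are each $O(\epsilon_n)$, with $\epsilon_n\approx m_n^{-1/2}V_n$). You instead difference the two indicator columns row by row, obtaining an i.i.d.\ sum of mean-zero, bounded, span-$1$ lattice variables $Z_r$, and invoke a local CLT or Esseen's concentration inequality to get $\Pr(\sum_r Z_r=0)=O(1/\sqrt{m_n})$ directly. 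Your reduction is cleaner and arguably sharper (it avoids the extra $V_n^2$ slack in the paper's bound), at the cost of importing a standard anti-concentration result rather than proving everything from Stirling. The one point you should make explicit is uniformity over the pair $(i,j)$: the law of $Z_r$ depends on the separation $s=|i-j|$, so the $O(1/\sqrt{m_n})$ constant must be uniform in $s$. This holds because $\Pr(Z_r=\pm 1)=u_1(1-u_1)(1-\gamma^{s})$, and since $|\gamma|<1$ one has $1-\gamma^{s}\ge\min(1-\gamma,1-\gamma^2)>0$ for all $s\ge 1$, so $\operatorname{Var}(Z_r)$ is bounded away from zero uniformly and Esseen's inequality (whose constant is absolute once divided by $\sigma\sqrt{m_n}$) closes the argument. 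With that observation added, your union bound gives $O(n^2/\sqrt{m_n})\to 0$ exactly under $m_n=\omega(n^4)$, matching the lemma.
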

\begin{proof}
See Appendix~\ref{proof:histogram}.
\end{proof}
Note that by Definition~\ref{defn:dbgrowthrate}, $m_n$ is exponential in $n$ and the order relation of Lemma~\ref{lem:histogram} is automatically satisfied.

Next, we present the proof of the achievability part of Theorem~\ref{thm:mainresult}.

\begin{proofachievable}
Let $S^n$ be the underlying repetition pattern and $K\triangleq\sum_{i=1}^n S_i$ be the number of columns in $\mathbf{D}^{(2)}$. Our matching scheme consists of the following steps:
\begin{enumerate}[label=\textbf{\arabic*)},leftmargin=1.3\parindent]
    \item Construct the collapsed histogram vectors $\tilde{{H}}^{(1),n}$ and $\tilde{{H}}^{(2),K}$ as 
\begin{align}
    \tilde{H}_j^{(r)}&=\sum\limits_{i=1}^{m_n} \mathbbm{1}_{\left[\tilde{D}^{(r)}_{i,j}=2 \right]},\quad
    \begin{cases}
    \forall j\in [n],&\text{if } r=1  \\
    \forall j\in [K] & \text{if } r=2
    \end{cases}
\end{align}
\item Check the uniqueness of the entries $\tilde{H}^{(1)}_j$ $j\in[n]$ of $\tilde{{H}}^{(1),n}$. If there are at least two which are identical, declare a \emph{detection error} whose probability is denoted by $\mu_n$. Otherwise, proceed with Step~3.
\item If $\tilde{H}^{(1)}_j$ is absent in $\tilde{{H}}^{(2),K}$, declare it deleted, assigning $\hat{S}_j=0$. Note that, conditioned on the uniqueness of the column histograms $\tilde{H}^{(1)}_j$ $\forall j\in[n]$, this step is error free.
\item If $\tilde{H}^{(1)}_j$ is present $s\ge 1$ times in $\tilde{{H}}^{(2),K}$ , assign $\hat{S}_j=s$. Again, if there is no detection error in Step~2, this step is error free. Note that at the end of this step, provided there are no detection errors, we recover $S^n$, \emph{i.e.}, $\hat{{S}}^n={S}^n$.
\item Based on $\hat{{S}}^n$, $\mathbf{D}^{(1)}$ and $\mathbf{D}^{(2)}$, construct $\bar{\mathbf{D}}^{(2)}$ as the following:
\begin{itemize}
    \item If $\hat{S}_j = 0$, the $j$\textsuperscript{th} column of $\bar{\mathbf{D}}^{(2)}$ is a column consisting of erasure symbol $\ast\notin\mathfrak{X}$.
    \item If $\hat{S}_j \ge 1$, the $j$\textsuperscript{th} column of $\bar{\mathbf{D}}^{(2)}$ is the $j$\textsuperscript{th} column of $\mathbf{D}^{(1)}$.
\end{itemize}
Note that after the removal of the additional replicas and the introduction of the erasure symbols, $\bar{\mathbf{D}}^{(2)}$ has $n$ columns.
\item Fix $\epsilon>0$. Let $p_{Y|X}$ be the probability transition matrix of an erasure channel with erasure probability $\delta$, that is
\begin{align}
    p_{Y|X}(y|x) &= \begin{cases}
    1-\delta &\text{if }y=x\\
    \delta &\text{if }y=\varepsilon
    \end{cases},\hspace{1em}\forall(x,y)\in\mathfrak{X}^2 \label{eq:erasure}
\end{align}
We consider the input to the memoryless erasure channel as the $i$\textsuperscript{th} row $X^n_i$ of $\mathbf{D}^{(1)}$. The output $\bar{Y}^n$ is the matching row of $\bar{\mathbf{D}}^{(2)}$. For our row matching algorithm, we match the $l$\textsuperscript{th} row $\bar{{Y}}^n_{l}$ of $\bar{\mathbf{D}}^{(2)}$ with the $i$\textsuperscript{th} row $X^n_i$ of $\mathbf{D}^{(1)}$, if $X^n_i$ is the only row of $\mathbf{D}^{(1)}$ jointly $\epsilon$-typical~\cite[Chapter 3]{cover2006elements} with $\bar{{Y}}^n_l$ with respect to $p_{X^n,Y^n}$, where
\begin{align}
    p_{X^n,Y^n}(x^n,y^n) &= p_{X^n}(x^n) \prod\limits_{j=1}^n p_{Y|X}(y_j|x_j)\label{eq:markovinput}
\end{align}
where $X^n$ denotes the Markov chain of length $n$ with probability transition matrix $\mathbf{P}$. This results in $\hat\Theta(1)=l$. 
Otherwise, declare \emph{collision error}.
\end{enumerate}

Denote the $\epsilon$-typical set of sequences (with respect to $p_{X^n}$) by $A_{\epsilon}^{(n)}(X)$ and the jointly $\epsilon$-typical set of sequences (with respect to $p_{X^n,Y^n}$) by $A_{\epsilon}^{(n)}(X,Y)$. Supposing the true label for the $l$\textsuperscript{th} row $\bar{Y}^n_l$ of $\bar{\mathbf{D}}^{(2)}$ is $1$, \emph{i.e.,} $\boldsymbol{\Theta}_n(1)=l$, and denoting the pairwise collision probability between $X^n_1$ and $X^n_i$, by $P_{col,i}$, for any $i\neq 1$ we have
\begin{align}
    P_{col,i}&=\Pr(({X}^n_i,\bar{{Y}}^n_l)\in A_\epsilon^{(n)})\\
    &\le 2^{-n(I(X;Y)-3\epsilon)}\vspace{-1em}
\end{align}
where 
\begin{align}
    I(X;Y) &= \lim\limits_{n\to\infty}\frac{I(X^n;\bar{Y}^n)}{n}
\end{align}
is the mutual information rate of the joint probability distribution $p_{X^n,Y^n}$. Thus, we can bound the probability of error $P_e$ as
\begin{align}
    P_e &\le \mu_n+\Pr({X}^n_1\notin A_\epsilon^{(n)}(X))+\sum\limits_{i=2}^n P_{col,i}\\
    &\le \mu_n+\epsilon+\sum\limits_{i=2}^n 2^{-n(I(X;Y)-3\epsilon)}\\
    &= \mu_n + \epsilon + 2^{n(R-I(X;Y)+3\epsilon)}
\end{align}

Since $m_n$ is exponential in $n$, by Lemma~\ref{lem:histogram}, ${\mu_n\to0}$ as ${n\to\infty}$. Thus
\begin{align}
    P_e&< 3 \epsilon \text{ as }n\to\infty
\end{align}
if 
$R<I(X;Y)-3\epsilon$. Thus, we can argue that any database growth rate $R$ satisfying
\begin{align}
    R&<I(X;Y)\label{eq:achievable}
\end{align}
is achievable, by taking $\epsilon$ small enough. From~\cite[Corollary II.2]{li2014input} we have
\begin{align}
    I(X;Y)&=(1-\delta)^2 \sum\limits_{r=0}^\infty \delta^r H(X_0|X_{-r-1}) \label{eq:MIrate}
\end{align}
where $H(X_0|X_{-r-1})$ is the entropy rate associated with the probability transition matrix $\mathbf{P}^{r+1}$. Finally, we prove \eqref{eq:Ppower} through induction.
By Definition~\ref{defn:markovdb}, \eqref{eq:Ppower} is satisfied for $r=0$. Now assume \eqref{eq:Ppower} is true for some $r\in\mathbb{N}$. In other words,
\begin{align}
    \mathbf{P}^{r}&=\gamma^{r} \mathbf{I}+(1- \gamma^{r}) \mathbf{U}
\end{align}
Observing $\mathbf{U}^k=\mathbf{U}$, $\forall k\in\mathbb{N}$, we obtain
\vspace{-1em}
\begin{adjustwidth}{-0.28cm}{0pt}
\begin{align}
    \mathbf{P}^{r+1}&=(\gamma \mathbf{I}+(1- \gamma) \mathbf{U})(\gamma^{r} \mathbf{I}+(1- \gamma^{r}) \mathbf{U})\\
    &= \gamma^{r+1} \mathbf{I} + (\gamma(1-\gamma^r)+(1-\gamma)\gamma^r) \mathbf{U}\notag\\
    &\hspace{3em}+ (1-\gamma)(1-\gamma^r) \mathbf{U}^2\\
    &=\gamma^{r+1} \mathbf{I} + (\gamma(1-\gamma^r)+(1-\gamma)\gamma^r+(1-\gamma)(1-\gamma^r)) \mathbf{U} \\
    &= \gamma^{r+1} \mathbf{I} + (1-\gamma^{r+1}) \mathbf{U}\label{eq:prplus1}
\end{align}
\end{adjustwidth}
From~\eqref{eq:MIrate}-\eqref{eq:prplus1} and \cite[Theorem 4.2.4]{cover2006elements} we obtain~\eqref{eq:thm1eval},
concluding the achievability part of the proof.
\end{proofachievable}

\section{Converse}
\label{sec:converse}
Theorem~\ref{thm:mainresult} states that we can convert repetitions to erasures, achieving the erasure bound. In this section, we show that the lower bound on the matching capacity $C$ given in Section~\ref{sec:achievability} is in fact tight, by proving it to also be an upper bound on the matching capacity $C$.
\begin{proofconverse}
Here we prove that the erasure bound given in~\eqref{eq:mainresult} is an upper bound on all achievable database growth rates. We adopt a genie-aided proof where the repetition pattern $S^n$ is available a-priori. Furthermore, we use the modified Fano's inequality presented in~\cite{shirani8849392}.

Let $R$ be the database growth rate and $P_e$ be the probability that the scheme is unsuccessful for a uniformly-selected row pair. More formally,
\begin{align}
   P_e&\triangleq \Pr\left(\boldsymbol{\Theta}_n(J)\neq\hat{\boldsymbol{\Theta}}_n(J)\right),\hspace{1em} J\sim\text{Unif}([m_n])
\end{align}
Furthermore, let $S^n$ be the repetition pattern and $K=\sum_{j=1}^n S_j$.

Since $\boldsymbol{\Theta}_n$ is a uniform permutation, from Fano's inequality, we have
\begin{align}
    H(\boldsymbol{\Theta}_n|\mathbf{D}^{(1)},\mathbf{D}^{(2)})
    &\le 1+P_e \log(m_n!)\\
    &\le 1+P_e m_n \log m_n\label{eqn:mfaclessthanmm}
\end{align}
where \eqref{eqn:mfaclessthanmm} follows from $m_n!\le m_n^{m_n}$. Thus, we get
\begin{align}
    H(\boldsymbol{\Theta}_n)&=H(\boldsymbol{\Theta}_n|\mathbf{D}^{(1)},\mathbf{D}^{(2)})+I(\boldsymbol{\Theta}_n;\mathbf{D}^{(1)},\mathbf{D}^{(2)})\\
    &\le 1+P_e m_n \log m_n+I(\boldsymbol{\Theta}_n;\mathbf{D}^{(1)},\mathbf{D}^{(2)})\label{eq:conversefinal1}
\end{align}
Note that
\begin{align}
    I(\boldsymbol{\Theta}_n;\mathbf{D}^{(1)},\mathbf{D}^{(2)})
    &=I(\boldsymbol{\Theta}_n;\mathbf{D}^{(2)})+I(\boldsymbol{\Theta}_n;\mathbf{D}^{(1)}|\mathbf{D}^{(2)})\\
    &= I(\boldsymbol{\Theta}_n;\mathbf{D}^{(1)}|\mathbf{D}^{(2)})\label{eq:converseMI1}\\
    &\le I(\boldsymbol{\Theta}_n,\mathbf{D}^{(2)};\mathbf{D}^{(1)})
\end{align}
where in \eqref{eq:converseMI1} we have used the independence of $\boldsymbol{\Theta}_n$ and $\mathbf{D}^{(2)}$.
\begin{align}
I(\boldsymbol{\Theta}_n,\mathbf{D}^{(2)};\mathbf{D}^{(1)})
    &\le I(\boldsymbol{\Theta}_n,\mathbf{D}^{(2)},\mathbf{S}^n;\mathbf{D}^{(1)})\\
    &= I(\mathbf{D}^{(1)};\boldsymbol{\Theta}_n,\mathbf{D}^{(2)}|S^n)\label{eq:CSindependent}\\
    &= \sum\limits_{i=1}^{m_n} I(D^{(1),n}_i;D^{(2),K}_{\Theta_n^{-1}(i)}|S^n)\\
    &= m_n I(D^{(1),n}_1;D^{(2),K}_{\Theta_n^{-1}(1)}|S^n)\\
    &= m_n I(D^{(1),n}_1;D^{(2),K}_{\Theta_n^{-1}(1)},S^n)\label{eq:rowsidenticallydistributed}
\end{align}
where \eqref{eq:CSindependent}-\eqref{eq:rowsidenticallydistributed} follow from the fact that $\mathbf{D}^{(1)}$ and $S^n$ are independent and non-matching rows are \emph{i.i.d.} conditioned on the repetition pattern $S^n$. 

Now, for brevity let ${X^n=D^{(1),n}_1}$, ${Y^K=D^{(2),K}_{\Theta_n^{-1}(1)}}$ and 
$\tilde{Y}^n$ be obtained from $Y^K$ as described in Step~5 of the achievability proof. Since there is a bijective mapping between $(Y^K,S^n)$ and $(\bar{{Y}}^n,{S}^n)$, we have
\begin{align}
    I({X}^n;{Y}^K,{S}^n) &= I({X}^n;\bar{{Y}}^n,{S}^n)\label{eq:noadditionalinfo1}\\
    &= I({X}^n;\bar{{Y}}^n) + I({X}^n;{S}^n|\bar{{Y}}^n)\\
    &= I({X}^n;\bar{{Y}}^n) \label{eq:noadditionalinfo2}
\end{align}
where \eqref{eq:noadditionalinfo2} follows from the independence of ${S}^n$ from ${X}^n$ conditioned on $\bar{{Y}}^n$. This is because since $\bar{{Y}}^n$ is stripped of all extra replicas, from $(X^n,\bar{{Y}}^n)$ we can only infer the zeros of $S^n$, which is already known through $\bar{{Y}}^n$ via erasure symbols.

Finally, from Stirling's approximation and the uniformity of $\boldsymbol{\Theta}_n$, we have
\begin{align}
\lim\limits_{n\to\infty}\frac{1}{m_n n}H(\boldsymbol{\Theta}_n)&= \lim\limits_{n\to\infty}\Big[ \frac{1}{n}\log m_n + \frac{1}{m_n n}O(n)\Big] = R\label{eq:limHR}
\end{align}

Therefore, from~\eqref{eq:conversefinal1}-\eqref{eq:limHR}, we have
\begin{align}
    \lim\limits_{n\to\infty}\frac{1}{m_n n}H(\boldsymbol{\Theta}_n)&\le \lim\limits_{n\to\infty}\left[ \frac{1}{m_n n}+P_e\frac{1}{n}\log m_n+I({X}^n;\bar{{Y}}^n)\right]\\
    R&\le \lim\limits_{n\to\infty}\frac{I({X}^n;\bar{{Y}}^n)}{n}\label{eqn:converselast}\\
    &= (1-\delta)^2 \sum\limits_{r=0}^\infty \delta^r H(X_0|X_{-r-1}) \label{eq:converseeval}
\end{align}
where \eqref{eqn:converselast} follows from the fact that $P_e\to 0$ as $n\to\infty$ and \eqref{eq:converseeval} follows from~\cite[Corollary II.2]{li2014input}. The rest of the proof follows from the evaluation of~\eqref{eq:converseeval} we did in Section~\ref{sec:achievability}.
\end{proofconverse}

Equations \eqref{eq:noadditionalinfo1}-\eqref{eq:noadditionalinfo2} suggest that the additional copies of the replicated columns do not offer any information. As a result, discarding the additional replicas in the matching scheme of Section~\ref{sec:achievability} does not impact optimality.

\section{Conclusion}
\label{sec:conclusion}
In this paper, we have studied the matching of Markov databases under random column repetitions. By proving the asymptotic-uniqueness of the column histograms of the databases, we have showed that these histograms can be used for the detection of the deleted and replicated columns. Using the proposed histogram-based detection and typicality-based row matching, we have derived an achievability result for database growth rate, which we have showed is tight, thus giving us the database matching capacity. Our ongoing work includes investigating the matching capacity in the presence of noise as well as synchronization errors~\cite{bakirtas2022seeded} and when different subsets of rows are sampled separately and then merged together, \emph{i.e.,} different subsets of rows experience different repetition patterns.

\bibliography{references}

\begin{thebibliography}{10}
\providecommand{\url}[1]{#1}
\csname url@samestyle\endcsname
\providecommand{\newblock}{\relax}
\providecommand{\bibinfo}[2]{#2}
\providecommand{\BIBentrySTDinterwordspacing}{\spaceskip=0pt\relax}
\providecommand{\BIBentryALTinterwordstretchfactor}{4}
\providecommand{\BIBentryALTinterwordspacing}{\spaceskip=\fontdimen2\font plus
\BIBentryALTinterwordstretchfactor\fontdimen3\font minus
  \fontdimen4\font\relax}
\providecommand{\BIBforeignlanguage}[2]{{%
\expandafter\ifx\csname l@#1\endcsname\relax
\typeout{** WARNING: IEEEtran.bst: No hyphenation pattern has been}%
\typeout{** loaded for the language `#1'. Using the pattern for}%
\typeout{** the default language instead.}%
\else
\language=\csname l@#1\endcsname
\fi
#2}}
\providecommand{\BIBdecl}{\relax}
\BIBdecl

\bibitem{ohm2009broken}
P.~Ohm, ``Broken promises of privacy: Responding to the surprising failure of
  anonymization,'' \emph{UCLA L. Rev.}, vol.~57, p. 1701, 2009.

\bibitem{bigdata}
J.~Sedayao, R.~Bhardwaj, and N.~Gorade, ``Making big data, privacy, and
  anonymization work together in the enterprise: Experiences and issues,'' in
  \emph{2014 IEEE International Congress on Big Data}, 2014, pp. 601--607.

\bibitem{naini2015you}
F.~M. {Naini}, J.~{Unnikrishnan}, P.~{Thiran}, and M.~{Vetterli}, ``Where you
  are is who you are: User identification by matching statistics,'' \emph{IEEE
  Trans. Inf. Forensics Security}, vol.~11, no.~2, pp. 358--372, 2016.

\bibitem{datta2012provable}
A.~Datta, D.~Sharma, and A.~Sinha, ``Provable de-anonymization of large
  datasets with sparse dimensions,'' in \emph{International Conference on
  Principles of Security and Trust}.\hskip 1em plus 0.5em minus 0.4em\relax
  Springer, 2012, pp. 229--248.

\bibitem{narayanan2008robust}
A.~{Narayanan} and V.~{Shmatikov}, ``Robust de-anonymization of large sparse
  datasets,'' in \emph{Proc. of IEEE Symposium on Security and Privacy}, 2008,
  pp. 111--125.

\bibitem{sweeney1997weaving}
L.~Sweeney, ``Weaving technology and policy together to maintain
  confidentiality,'' \emph{The Journal of Law, Medicine \& Ethics}, vol.~25,
  no. 2-3, pp. 98--110, 1997.

\bibitem{takbiri2018matching}
N.~Takbiri, A.~Houmansadr, D.~L. Goeckel, and H.~Pishro-Nik, ``Matching
  anonymized and obfuscated time series to users’ profiles,'' \emph{IEEE
  Transactions on Information Theory}, vol.~65, no.~2, pp. 724--741, 2019.

\bibitem{wondracek2010practical}
G.~Wondracek, T.~Holz, E.~Kirda, and C.~Kruegel, ``A practical attack to
  de-anonymize social network users,'' in \emph{Proc. of IEEE Symposium on
  Security and Privacy}, 2010, pp. 223--238.

\bibitem{su2017anonymizing}
J.~Su, A.~Shukla, S.~Goel, and A.~Narayanan, ``De-anonymizing web browsing data
  with social networks,'' in \emph{Proc. of the 26th international conference
  on world wide web}, 2017, pp. 1261--1269.

\bibitem{shusterman2019robust}
A.~Shusterman, L.~Kang, Y.~Haskal, Y.~Meltser, P.~Mittal, Y.~Oren, and
  Y.~Yarom, ``Robust website fingerprinting through the cache occupancy
  channel,'' in \emph{28th USENIX Security Symposium (USENIX Security 19)},
  2019, pp. 639--656.

\bibitem{gulmezoglu2017perfweb}
B.~Gulmezoglu, A.~Zankl, T.~Eisenbarth, and B.~Sunar, ``Perfweb: How to violate
  web privacy with hardware performance events,'' in \emph{European Symposium
  on Research in Computer Security}.\hskip 1em plus 0.5em minus 0.4em\relax
  Springer, 2017, pp. 80--97.

\bibitem{bilge2009all}
L.~Bilge, T.~Strufe, D.~Balzarotti, and E.~Kirda, ``All your contacts are
  belong to us: automated identity theft attacks on social networks,'' in
  \emph{Proc. of the 18th international conference on World wide web}, 2009,
  pp. 551--560.

\bibitem{srivatsa2012deanonymizing}
M.~Srivatsa and M.~Hicks, ``Deanonymizing mobility traces: Using social network
  as a side-channel,'' in \emph{Proc. of the 2012 ACM conference on Computer
  and communications security}, 2012, pp. 628--637.

\bibitem{cheng2010you}
Z.~Cheng, J.~Caverlee, and K.~Lee, ``You are where you tweet: a content-based
  approach to geo-locating twitter users,'' in \emph{Proc. of the 19th ACM
  international conference on Information and knowledge management}, 2010, pp.
  759--768.

\bibitem{kinsella2011m}
S.~Kinsella, V.~Murdock, and N.~O'Hare, ``" i'm eating a sandwich in glasgow"
  modeling locations with tweets,'' in \emph{Proc. of the 3rd international
  workshop on Search and mining user-generated contents}, 2011, pp. 61--68.

\bibitem{kim2016inferring}
H.~Kim, S.~Lee, and J.~Kim, ``Inferring browser activity and status through
  remote monitoring of storage usage,'' in \emph{Proc. of the 32nd Annual
  Conference on Computer Security Applications}, 2016, pp. 410--421.

\bibitem{erdos1960evolution}
P.~Erdos, A.~R{\'e}nyi \emph{et~al.}, ``On the evolution of random graphs,''
  \emph{Publ. Math. Inst. Hung. Acad. Sci}, vol.~5, no.~1, pp. 17--60, 1960.

\bibitem{babai1980random}
L.~Babai, P.~Erdos, and S.~M. Selkow, ``Random graph isomorphism,'' \emph{SIAM
  Journal on computing}, vol.~9, no.~3, pp. 628--635, 1980.

\bibitem{janson2011random}
S.~Janson, A.~Rucinski, and T.~Luczak, \emph{Random graphs}.\hskip 1em plus
  0.5em minus 0.4em\relax John Wiley \& Sons, 2011.

\bibitem{czajka2008improved}
T.~Czajka and G.~Pandurangan, ``Improved random graph isomorphism,''
  \emph{Journal of Discrete Algorithms}, vol.~6, no.~1, pp. 85--92, 2008.

\bibitem{yartseva2013performance}
L.~Yartseva and M.~Grossglauser, ``On the performance of percolation graph
  matching,'' in \emph{Proc. of the first ACM conference on Online social
  networks}, 2013, pp. 119--130.

\bibitem{pedarsani2013bayesian}
P.~Pedarsani, D.~R. Figueiredo, and M.~Grossglauser, ``A bayesian method for
  matching two similar graphs without seeds,'' in \emph{2013 51st Annual
  Allerton Conference on Communication, Control, and Computing
  (Allerton)}.\hskip 1em plus 0.5em minus 0.4em\relax IEEE, 2013, pp.
  1598--1607.

\bibitem{fiori2013robust}
M.~Fiori, P.~Sprechmann, J.~Vogelstein, P.~Mus{\'e}, and G.~Sapiro, ``Robust
  multimodal graph matching: Sparse coding meets graph matching,''
  \emph{Advances in Neural Information Processing Systems}, vol.~26, 2013.

\bibitem{lyzinski2014seeded}
V.~Lyzinski, D.~E. Fishkind, and C.~E. Priebe, ``Seeded graph matching for
  correlated erd{\"o}s-r{\'e}nyi graphs.'' \emph{J. Mach. Learn. Res.},
  vol.~15, no.~1, pp. 3513--3540, 2014.

\bibitem{onaran2016optimal}
E.~Onaran, S.~Garg, and E.~Erkip, ``Optimal de-anonymization in random graphs
  with community structure,'' in \emph{2016 50th Asilomar Conference on
  Signals, Systems and Computers}.\hskip 1em plus 0.5em minus 0.4em\relax IEEE,
  2016, pp. 709--713.

\bibitem{cullina2016improved}
D.~Cullina and N.~Kiyavash, ``Improved achievability and converse bounds for
  erdos-r{\'e}nyi graph matching,'' \emph{ACM SIGMETRICS performance evaluation
  review}, vol.~44, no.~1, pp. 63--72, 2016.

\bibitem{sanfeliu2002graph}
A.~Sanfeliu, R.~Alqu{\'e}zar, J.~Andrade, J.~Climent, F.~Serratosa, and
  J.~Verg{\'e}s, ``Graph-based representations and techniques for image
  processing and image analysis,'' \emph{Pattern recognition}, vol.~35, no.~3,
  pp. 639--650, 2002.

\bibitem{blazewicz2002dna}
J.~B{\l}a{\.z}ewicz, P.~Formanowicz, M.~Kasprzak, P.~Schuurman, and G.~J.
  Woeginger, ``Dna sequencing, eulerian graphs, and the exact perfect matching
  problem,'' in \emph{International Workshop on Graph-Theoretic Concepts in
  Computer Science}.\hskip 1em plus 0.5em minus 0.4em\relax Springer, 2002, pp.
  13--24.

\bibitem{shirani8849392}
F.~{Shirani}, S.~{Garg}, and E.~{Erkip}, ``A concentration of measure approach
  to database de-anonymization,'' in \emph{Proc. of IEEE International
  Symposium on Information Theory (ISIT)}, 2019, pp. 2748--2752.

\bibitem{cullina}
D.~{Cullina}, P.~{Mittal}, and N.~{Kiyavash}, ``Fundamental limits of database
  alignment,'' in \emph{Proc. of IEEE International Symposium on Information
  Theory (ISIT)}, 2018, pp. 651--655.

\bibitem{dai2019database}
O.~E. Dai, D.~Cullina, and N.~Kiyavash, ``Database alignment with gaussian
  features,'' in \emph{The 22nd International Conference on Artificial
  Intelligence and Statistics}.\hskip 1em plus 0.5em minus 0.4em\relax PMLR,
  2019, pp. 3225--3233.

\bibitem{bakirtas2021database}
S.~Bakırtaş and E.~Erkip, ``Database matching under column deletions,'' in
  \emph{Proc. of IEEE International Symposium on Information Theory (ISIT)},
  2021, pp. 2720--2725.

\bibitem{bakirtas2022seeded}
S.~Bakirtas and E.~Erkip, ``Seeded database matching under noisy column
  repetitions,'' \emph{arXiv preprint arXiv:2202.01724}, 2022.

\bibitem{kunisky2022strong}
D.~Kunisky and J.~Niles-Weed, ``Strong recovery of geometric planted
  matchings,'' in \emph{Proc. of the 2022 Annual ACM-SIAM Symposium on Discrete
  Algorithms (SODA)}.\hskip 1em plus 0.5em minus 0.4em\relax SIAM, 2022, pp.
  834--876.

\bibitem{cheraghchi2020overview}
M.~Cheraghchi and J.~Ribeiro, ``An overview of capacity results for
  synchronization channels,'' \emph{IEEE Transactions on Information Theory},
  vol.~67, no.~6, pp. 3207--3232, 2021.

\bibitem{li2014input}
Y.~Li and G.~Han, ``Input-constrained erasure channels: Mutual information and
  capacity,'' in \emph{Proc. of IEEE International Symposium on Information
  Theory (ISIT)}, 2014, pp. 3072--3076.

\bibitem{burke1958markovian}
C.~Burke and M.~Rosenblatt, ``A markovian function of a markov chain,''
  \emph{The Annals of Mathematical Statistics}, vol.~29, no.~4, pp. 1112--1122,
  1958.

\bibitem{cover2006elements}
T.~M. Cover, \emph{Elements of Information Theory}.\hskip 1em plus 0.5em minus
  0.4em\relax John Wiley \& Sons, 2006.

\end{thebibliography}
\bibliographystyle{IEEEtran}

\appendix
\subsection{Proof of Lemma~\ref{lem:histogram}}\label{proof:histogram}
We let ${\mu_n\triangleq \Pr(\exists i,j\in [n],\: i\neq j,\tilde{{H}}^{(1)}_{i}=\tilde{{H}}^{(1)}_j)}$. From the union bound we obtain
\begin{align}
    \mu_n&\le \sum\limits_{{(i,j)\in[n]^2:i<j}} \Pr(\tilde{{H}}^{(1)}_{i}=\tilde{{H}}^{(1)}_{j})\\
    &\le n^2 \max\limits_{{(i,j)\in[n]^2:i<j}} \Pr(\tilde{{H}}^{(1)}_{i}=\tilde{{H}}^{(1)}_{j})
\end{align}
Due to stationarity, the maximum is equal to $\Pr(\tilde{{H}}^{(1)}_1=\tilde{{H}}^{(1)}_{s+1})$ for some $s$. For brevity, let $\mathbf{Q}\triangleq\tilde{\mathbf{P}}^s$ and $q\triangleq \Pr(\tilde{{H}}^{(1)}_1=\tilde{{H}}^{(1)}_{s+1})$. Observe that 
$\tilde{{H}}^{(1)}_1$ and $\tilde{{H}}^{(1)}_{s+1}$ are correlated Binom($m_n,1-u_1$) random variables and for any $s$, $\mathbf{Q}$ has positive values, \emph{i.e.,} the collapsed Markov chain is irreducible for any $s$. Now, we have
\vspace{-1em}
\begin{adjustwidth}{-0.25cm}{0pt}
\begin{align}
     q&= \sum\limits_{r=0}^{m_n} \Pr(\tilde{{H}}^{(1)}_1=r) \Pr(\tilde{{H}}^{(1)}_{s+1}=r|\tilde{{H}}^{(1)}_1=r)\\
    &= \sum\limits_{r=0}^{m_n} \binom{m}{r}(1-u_1)^r u_1^{m_n-r} \Pr(\tilde{{H}}^{(1)}_{s+1}=r|\tilde{{H}}^{(1)}_1=r)
\end{align}
\end{adjustwidth}
Note that since the rows of $\tilde{\mathbf{D}}^{(1)}$ are \emph{i.i.d.}, we have 
\begin{align}
    \Pr(\tilde{{H}}^{(1)}_{s+1}=r|\tilde{{H}}^{(1)}_1=r) = \Pr(A+B=r)
\end{align}
where $A\sim\text{Binom}(r,Q_{1,1})$ and $B\sim\text{Binom}(m_n-r,Q_{2,1})$ are independent. Then, from Stirling's approximation and~\cite[Theorem 11.1.2]{cover2006elements}, we get
\begin{align}
    q
    &= \sum\limits_{r=0}^{m_n} \binom{{m_n}}{r}(1-u_1)^r u_1^{{m_n}-r} \Pr(A+B=r)\\
    &\le \frac{e}{\sqrt{2\pi}} {m_n}^{-1/2} \sum\limits_{r=0}^{m_n} \Pi_r^{-1} 2^{-{m_n} D(\frac{r}{{m_n}}\|(1-u_1))}\Pr(A+B=r) 
\end{align}
where $\Pi_r=\frac{r}{{m_n}}(1-\frac{r}{{m_n}})$. Let
\begin{align}
    T &= \sum\limits_{r=0}^{m_n} \Pi_r^{-1} 2^{-{m_n} D(\frac{r}{{m_n}}\|(1-u_1))}\Pr(A+B=r) = T_1+T_2
\end{align}
where
\begin{align}
    T_1 &= \sum_{\mathclap{\hspace{4em} r:D(\frac{r}{{m_n}}\|1-u_1)> \frac{\epsilon_n^2}{2\log_e 2}}} \hspace{2em}\Pi_r^{-1} 2^{-{m_n} D(\frac{r}{{m_n}}\|(1-u_1))}\Pr(A+B=r)\label{eq:T1}\\
    T_2 &= \sum_{\mathclap{\hspace{4em} r:D(\frac{r}{{m_n}}\|1-u_1)\le \frac{\epsilon_n^2}{2\log_e 2}}} \hspace{2em} \Pi_r^{-1} 2^{-{m_n} D(\frac{r}{{m_n}}\|(1-u_1))}\Pr(A+B=r),\label{eq:T2}
\end{align}
$D(\frac{r}{{m_n}}\|(1-u_1))$ denotes the Kullback-Leibler divergence between Bernoulli($\frac{r}{{m_n}}$) and Bernoulli($1-u_1$) distributions, and $\epsilon_n>0$, which is described below in more detail, is such that $\epsilon_n\to0$ as $n\to\infty$.

First, we look at $T_1$. Note that for any $r\in\mathbb{N}$, we have ${\Pi_r\le {m_n}^2}$, suggesting the multiplicative term in the summation in~\eqref{eq:T1} is polynomial with ${m_n}$. Note that we can simply separate the cases $r=0$, $r={m_n}$ whose probabilities vanish exponentially in ${m_n}$. Therefore, as long as ${{m_n} \epsilon_n^2\to\infty}$, $T_1$ has a polynomial number of elements which decay exponentially with ${m_n}$. Thus
\begin{align}
    T_1\to0\text{ as }n\to\infty\label{eq:t1}
\end{align}
as long as ${{m_n} \epsilon_n^2\to\infty}$.

Now, we focus on $T_2$. From Pinsker's inequality~\cite[Lemma 11.6.1]{cover2006elements}, we have
\begin{align}
    D\left(\frac{r}{{m_n}}\Big\|1-u_1\right)\le \frac{\epsilon_n^2}{2\log_e 2}\Rightarrow \text{TV}\left(\frac{r}{{m_n}}, 1-u_1\right)\le \epsilon_n
\end{align}
where TV denotes the total variation distance between the Bernoulli distributions with given parameters. Therefore
\begin{align}
    \Big|\{r:D\Big(\frac{r}{{m_n}}\Big\|1-u_1\Big)\le &\frac{\epsilon_n^2}{2\log_e 2}\}\Big|\\
    &\le \Big|\{r:\text{TV}\Big(\frac{r}{{m_n}}, 1-u_1\Big)\le \epsilon_n\}\Big| \\
    &= O({m_n}\epsilon_n)
\end{align}
for small $\epsilon_n$. Furthermore, when $\text{TV}\left(\frac{r}{{m_n}}, 1-u_1\right)\le \epsilon_n$, we have 
\begin{align}
    \Pi_{r}^{-1} &\le \frac{1}{(1-u_1) u_1}
\end{align}

Now, we investigate $\Pr(A+B=r)$ for the values of $r$ in the interval $[{{m_n}(1-u_1-\epsilon_n)},{{m_n}(1-u_1+\epsilon_n)}]$.
\vspace{-1em}
\begin{adjustwidth}{-0.25cm}{0pt}
\begin{align}
    \Pr(A+B=r)&=\sum\limits_{i=1}^r \Pr(A=r-i) \Pr(B=i)\notag\\[-0.75em]
    &\hspace{4em}+ \Pr(A=r)\Pr(B=0)\\
    &=Q_{1,1}^r Q_{2,2}^{{m_n}-r}+ \sum\limits_{i=1}^r \binom{r}{i} Q_{1,1}^{r-i} (1-Q_{1,1})^i\notag\\
    &\hspace{5em}\binom{{m_n}-r}{i} Q_{2,1}^i (1-Q_{2,1})^{{m_n}-r-i}\label{eq:binom2}
\end{align}
\end{adjustwidth}
Again, from Stirling's approximation on the binomial coefficient in \eqref{eq:binom2} and~\cite[Theorem 11.1.2]{cover2006elements}, we have
\begin{align}
    \Pr(A+B=r)&\le Q_{1,1}^r Q_{2,2}^{{m_n}-r} + \frac{e^2}{2\pi}r^{-1/2}({m_n}-r)^{-1/2} U
\end{align}
where 
\begin{align}
    U = \sum\limits_{i=1}^r \Pi^{-1}_{i/r} \Pi^{-1}_{i/{m_n}-r} 2^{-r D(1-\frac{i}{r}\|Q_{1,1}) -({m_n}-r) D(\frac{i}{{m_n}-r}\|Q_{2,1})}
\end{align}
Then, from $r\in[{{m_n}(1-u_1-\epsilon_n)},{{m_n}(1-u_1+\epsilon_n)}]$ we obtain
\begin{align}
    \Pr(A+B=r)&\le Q_{1,1}^r Q_{2,2}^{{m_n}-r} + \frac{e^2}{2\pi}\frac{{m_n}^{-1}}{\sqrt{(1-u_1-\epsilon_n)(u_1-\epsilon_n)}}  U
\end{align}
and
\begin{align}
    U&\le \sum\limits_{i=1}^r \Pi^{-1}_{i/r} \Pi^{-1}_{i/{m_n}-r}\notag\\
    &\hspace{5em}2^{-{m_n}\left[ (1-u_1-\epsilon_n) D(1-\frac{i}{r}\|Q_{1,1})+(u_1-\epsilon_n) D(\frac{i}{{m_n}-r}\|Q_{2,1})\right]}\\
    &= \sum\limits_{i\notin \mathcal{R}(\epsilon_n)} \Pi^{-1}_{i/r} \Pi^{-1}_{i/{m_n}-r}\notag\\
    &\hspace{5em} 2^{-{m_n}\left[ (1-u_1-\epsilon_n) D(1-\frac{i}{r}\|Q_{1,1})+(u_1-\epsilon_n) D(\frac{i}{{m_n}-r}\|Q_{2,1})\right]}\notag\\
    &+ \sum\limits_{i\in \mathcal{R}(\epsilon_n)} \Pi^{-1}_{i/r} \Pi^{-1}_{i/{m_n}-r} \notag\\
    &\hspace{5em}2^{-{m_n}\left[ (1-u_1-\epsilon_n) D(1-\frac{i}{r}\|Q_{1,1})+(u_1-\epsilon_n) D(\frac{i}{{m_n}-r}\|Q_{2,1})\right]}\label{eq:U}
\end{align}
where we define the set $\mathcal{R}(\epsilon_n)$ as
\begin{align}
    \mathcal{R}(\epsilon_n) &\triangleq\Big\{i\in[r]:D\Big(1-\frac{i}{r}\Big\|Q_{1,1}\Big),D\Big(\frac{i}{{m_n}-r}\Big\|Q_{2,1}\Big)\le \frac{\epsilon_n^2}{2\log_e 2}\Big\}
\end{align}
Note that similar to $T_1$, the first summation in \eqref{eq:U} vanishes exponentially in ${m_n}$ whenever ${m_n}\epsilon_n^2\to\infty$, and using Pinsker's inequality once more, the second term can be upper bounded by
\begin{align}
    O(|\mathcal{R}(\epsilon_n)|)=O({m_n}\epsilon_n)
\end{align}
Now, we choose ${\epsilon_n={m_n}^{-\frac{1}{2}} V_n}$ for some $V_n$ satisfying ${V_n=\omega(1)}$ and ${V_n=o(m_n^{1/2})}$. Thus, $T_1$ vanishes exponentially fast since ${m_n\epsilon_n^2=V_n^2\to\infty}$ and 
\begin{gather}
\Pr(A+B=r) = O(\epsilon_n)\\
    T  = O({m_n} \epsilon_n^2)=O(V_n^2)\\
    \mu_n = O(n^2 {m_n}^{-1/2} V_n^2)
\end{gather}
By the assumption ${m_n=\omega(n^4)}$, we have ${m_n=n^4 W_n}$ for some $W_n$ satisfying  ${\lim\limits_{n\to\infty} W_n=\infty}$. Now, taking ${V_n=o(W_n^{1/4})}$ (e.g. ${V_n=W_n^{1/6}}$), we get
\begin{align}
    \mu_n&\le O( W_n^{-1/2} V_n^2)
    = o(1)
\end{align}
Thus $m_n=\omega(n^4)$ is enough to have $\mu_n\to0$ as $n\to\infty$.\qed
\end{document}